\pgfplotsset{compat=newest}
\newcommand{\orcid}[1]{\href{https://orcid.org/#1}{\includegraphics[width=10pt]{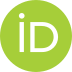}}}
\begin{document}
\renewcommand{\arraystretch}{1.2}
\definecolor{zzttqq}{rgb}{0.6,0.2,0}
\definecolor{xdxdff}{rgb}{0.045019607843137253,0.45019607843137253,20}
\definecolor{uuuuuu}{rgb}{0.26666666666666666,0.26666666666666666,0.26666666666666666}
\title{Geometric genuine multipartite entanglement for four-qubit systems
}


\author{Ansh Mishra$^{1}$\orcid{0009-0003-4235-1714} \and Soumik Mahanti$^{*2,3}$\orcid{0000-0002-0380-0324} \and Abhinash Kumar Roy$^4$\orcid{0000-0001-7156-1989} \and Prasanta K. Panigrahi$^{2}$\orcid{0000-0001-5812-0353}}


\date{Received: date / Accepted: date}
\institute{
    Ansh Mishra \at
    \email{anshmishra471@gmail.com}
    \and
    Soumik Mahanti (Corresponding Author) \at
    \email{soumikmh1998@gmail.com}
    \and
    Abhinash Kumar Roy \at
    \email{aroy.iiser@gmail.com}
    \and
    Prasanta K. Panigrahi  \at
    \email{panigrahi.iiser@gmail.com}
    \and
$^1$Indian Institute of Science Education and Research, Mohali, Punjab 140306, India\\
$^2$Indian Institute of Science Education and Research Kolkata, Mohanpur, 741246 West Bengal, India\\
$^3$  S.N. Bose National Center for Basic Sciences, Block JD, Sector III, Salt Lake, Kolkata 700106, India\\
$^4$ Centre for Engineered Quantum Systems, Department of Physics and Astronomy, Macquarie University, Sydney, NSW 2122, Australia\\
$*$ the author to whom all the correspondence is to be made}

\maketitle

\begin{abstract}
Xie and Eberly introduced a genuine multipartite entanglement (GME) measure `concurrence fill'(\textit{Phys. Rev. Lett., \textbf{127}, 040403} (2021)) for three-party systems. It is defined as the area of a triangle whose side lengths represent squared concurrence in each bi-partition. However, it has been recently shown that concurrence fill is not monotonic under LOCC, hence not a faithful measure of entanglement. Though it is not a faithful entanglement measure, it encapsulates an elegant geometric interpretation of bipartite squared concurrences. There have been a few attempts to generalize GME measure to four-party settings and beyond. However, some of them are not faithful, and others simply lack an elegant geometric interpretation. The recent proposal from Xie et al. constructs a concurrence tetrahedron, whose volume gives the amount of GME for four-party systems; with generalization to more than four parties being the hypervolume of the simplex structure in that dimension. Here, we show by construction that to capture all aspects of multipartite entanglement, one does not need a more complex structure, and the four-party entanglement can be demonstrated using \textit{2D geometry only}. The subadditivity together with the Araki-Lieb inequality of linear entropy is used to construct a direct extension of the geometric GME to four-party systems resulting in quadrilateral geometry. Our measure can be geometrically interpreted as a combination of three quadrilaterals whose sides result from the concurrence in one-to-three bi-partition, and diagonal as concurrence in two-to-two bipartition. 
\end{abstract}

\section{Introduction}
Quantum correlations, in particular, entanglement is a fundamental feature of multipartite quantum systems, which arises from the superposition principle and tensor product structure of the Hilbert spaces \cite{schrodinger_1935, Horodecki_2009}. Entanglement is a crucial resource in various information processing tasks such as superdense coding \cite{Bennet_1992}, quantum teleportation \cite{Bennet_1993}, quantum key distribution \cite{Ekert91}, and quantum secret sharing \cite{Hillery_1999}, to mention a few. Therefore, the quantification of entanglement is of significant interest. For a two-party system, entanglement is quantified through various measures such as Concurrence \cite{Wootters_1997}, Negativity \cite{Jungnitsch_2011}, etc, and all faithful two-party entanglement measures agree in comparing entanglement for any two states \cite{singh2020revisiting}. However, classification and quantification of entangled states is still not well understood in higher dimensional and multiparty systems \cite{Bastin,Ribeiro,LiLi,Gilad,Ghahi,Home}. For more than two-party systems, owing to the presence of several possible bi-partitions, the entanglement structure becomes more complex; for instance, three-qubit states are of three types: fully separable, bi-separable, and genuinely entangled states \cite{Dur}. For information processing tasks involving multiple parties, genuinely entangled states act as a key resource \cite{Meyer_2002, Coffman_2000_distributed}, and hence there have been several studies to quantify multiparty entanglement \cite{Eberly_2021,Beckey_PRL_2021,shi2022genuine,guo2022genuine,PhysRevResearch.4.023059,brennen2003observable, Blasone_PRA,Sen_2010,Ma_2011, Coffman_2000}. Recently a seminal result has been obtained to show that there is no second law of entanglement manipulation, which implies that there is no unique function that governs all entanglement transformations \cite{Lami_23}. In short, the study of characterization and quantification of entanglement is a very rich area of research even today, particularly from a geometric perspective, which can provide insight into the distributive nature of entanglement \cite{Gharahi2,Eberly_2021,dharmaraj2022reimannian, mahanti2022classification}.

\noindent For three-party systems, there are various measures for quantifying genuine multipartite entanglement (GME), such as generalized geometric measure (GGM) \cite{Sen_2010}, genuine multipartite concurrence (GMC) \cite{Ma_2011}, and 3-tangle \cite{Coffman_2000} to name a few. However, each of them has its own limitations. 
The recent approach to quantify GME for three-qubit systems, termed as `concurrence fill' proposed by Xie and Eberly, has been argued to be superior to the other existing measures \cite{Eberly_2021}. It is shown to be zero for product and bi-separable states and positive for all genuinely entangled states. Furthermore, it leads to a fine-graining of the entanglement monotone and provides a tool for comparison of entanglement where other measures are inconclusive. The concurrence fill is a smooth function over the parametric space, and it takes into account entanglement across all bi-partitions to calculate the genuine multiparty entanglement. The proposed measure is also in accordance with the empirical results leading to an additional operational criterion for three qubits GME; namely, GHZ states being more genuinely entangled as compared to W states. In a recent article, Ge et al have constructed an explicit example of a local operation and classical communication (LOCC) scheme to prove that the area of the triangle increases under that operation, although the side lengths are nonincreasing \cite{Ge_PRA_2023}. This implies that `concurrence fill' is not a faithful measure of entanglement. However, they have found no counterexample of LOCC monotonicity if the sides of the triangle are taken as concurrence instead of squared concurrence. Thus they further conjecture that this `concurrence triangle' can be a faithful entanglement measure. Although concurrence fill is found not to be a faithful measure of entanglement, it truly captures many fascinating aspects of entanglement geometry and bipartite concurrences.

\noindent Recently, there have been several approaches to construct new GME measures incorporating all bipartite concurrences \cite{PhysRevResearch.4.023059,Yang_2022,guo2022genuine,shi2022genuine}. However, most of them fail to capture the elegant geometric interpretation, which was the main discovery in \cite{Eberly_2021}. In another recent work \cite{Eberly_2023}, the triangle-based measure has been generalized to the volume of a tetrahedron for four-party systems. They propose that for five-party pure states and more, the entanglement measure will be the hypervolume of the simplex in that dimension. Although the correctness of the approach is not questioned, we must point out that they have not proved the LOCC monotonicity of their proposed measure. Furthermore, for five-party and beyond, the simplex is hard to visualize and loses the elegant geometric representation. 

\noindent In this work, we present a similar measure for four-party systems inspired by geometry, but restrict it to 2 dimensions. We argue that to quantify genuine multiparty entanglement in four-party states, one need not consider higher dimensional geometric configurations. We show that the GME can be quantified through three planar structures, namely three quadrilaterals with each side representing \textit{one-to-three} squared concurrences. Now, four lengths do not fix the area of a quadrilateral as the number of free parameters for a quadrilateral is five. Hence, if the length of one diagonal of the quadrilateral is fixed, its area is fixed. Then the total area of the quadrilateral can be thought of as the sum of two triangles, with the diagonal as the common side. Here, we exploit the subadditivity of linear entropy to first show that the \textit{two-to-two} squared concurrence and combinations of two \textit{one-to-three} squared concurrences follow a triangle inequality. This relation allows us to treat the \textit{two-to-two} squared concurrence as the diagonals of the quadrilaterals. Furthermore, we show that if the concurrence across any bipartition is zero, then either the quadrilateral becomes of zero area, or it reduces to a triangle. Finally, to construct a GME, we multiply all the triangle areas. This is the simplest combination of smaller areas that satisfy all the GME criteria. We also point out that our measure can be constructed with both concurrence and squared concurrence as side lengths or diagonals. Following the conjecture in \cite{Ge_PRA_2023}, it can be the case that the GME formed using concurrence and squared concurrence show different performances under LOCC operations. It will be worth exploring the LOCC monotonicity for each case.

\noindent The manuscript is organized as follows. In Sec. \ref{section_2}, we revisit some preliminary concepts. In Sec. \ref{section_3}, we establish various constraints on the concurrences across different bi-partitions and show how it leads to the definition of our geometric measure of GME for all four party states. Moreover, we discuss its implications and practicality in comparison to other measures. In Sec. \ref{section_4}, we conclude by summarizing the results and future directions.
 
\section{Preliminaries}
\label{section_2}
Below we provide a review of a few concepts regarding entanglement and entanglement measures that are used throughout the paper.
\subsection*{\textit{I-Concurrence}}
Entanglement amounts to the correlation between different subsystems and acts as the most useful resource in most quantum information and communication protocols. Hence, the study of entanglement is of both foundational and operational importance \cite{Rather_PRL_2021}. An entanglement measure is a function $E(\rho)$ which maps the state space of the system to the positive real numbers, and thus a quantifier of entanglement. Any faithful measure $E(\rho)$ must satisfy certain conditions \cite{Vedral_1997}, which are following
\begin{enumerate}
    \item $E(\rho) = 0$ for separable states (separable states are not entangled). 
    \item $E(\rho)$ is invariant under local unitary transformations (local operations can not influence nonlocal correlation).
    \item $\langle E(\rho)\rangle \text{ }\geq \text{ }\langle E(\varepsilon_m(\rho))\rangle$ for any m-partite LOCC represented as $\varepsilon_m$ (average entanglement does not increase under LOCC).
\end{enumerate}
Concurrence is one such faithful measure of entanglement that is defined over any bipartition for pure as well as mixed states. The I-concurrence can be a  generalization of concurrence for pure multipartite systems \cite{Rungta_PRA_2001}, and has a profound geometrical interpretation \cite{Bhaskara_QINP_2017,roy2021geometric,banerjee2020quantifying,roy2022coherence}. Consider a multiparty system described by the pure state $\rho = |\psi\rangle\langle\psi|$. Then the I-concurrence corresponding to the bi-partition $\mathcal{A}|\mathcal{B}$ is given by:
\begin{equation}
    \mathcal{C}_{\mathcal{A}|\mathcal{B}} = \sqrt{2(1-Tr(\rho_{\mathcal{A}}^{2}))} = \sqrt{2(1-Tr(\rho_{\mathcal{B}}^{2}))}
\end{equation}

\noindent where, $\rho_{\mathcal{A}} = Tr_{B}(\rho)$ and $\rho_{\mathcal{B}} = Tr_{A}(\rho)$ are the reduced density matrices corresponding to the subsystem $\mathcal{A}$ and subsystem $\mathcal{B}$ respectively.

\subsection*{\textit{Genuine Multipartite Entanglement}}
The concept of ``genuine'' entanglement means that the state is inseparable across all possible bi-partitions i.e., each qubit is entangled with all other qubits. Mathematically, genuinely entangled pure states cannot be written as a tensor product of states across any bi-partition. For pure three-qubit systems, there are four classes of possible states, namely product states, bi-separable states, W states, and GHZ states. The product states are separable across all bi-partitions, bi-separable states are separable across one bi-partition and the W and GHZ states cannot be broken down into a product of states across any bi-partition. Hence they are genuinely entangled. In the case of four qubit systems, the entanglement structure becomes more complex. This complexity is owed to the fact that there are two kinds of possible bi-partitions. To be specific, there are four \textit{one-to-three} bi-partitions and three \textit{two-to-two} bi-partitions. The product states are the ones that are separable across all seven bi-partitions. The number of SLOCC classes for four qubit system is infinite as the inequivalent classes constitute a continuous family, but there are nine different subfamilies of the four qubit entangled states \cite{Verstraete_four_qubits}. For example, four qubit generalized GHZ and HS (Higuchi and Sudbery) states belong to the $G_{abcd}$ family,
\[G_{abcd} = \frac{a+d}{2}(|0000\rangle + |1111\rangle) + \frac{a-d}{2}(|0011\rangle + |1100\rangle) + \frac{b+c}{2}(|0101\rangle + |1010\rangle) + \frac{b-c}{2}(|0110\rangle + |1001\rangle).\]
The generalized $|W\rangle_{4}$ state is the member of $L_{ab_{3}}$ family,
\begin{equation*}
    \resizebox{.99\hsize}{!}{$L_{ab_{3}} = a(|0000\rangle + |1111\rangle) + \frac{a+b}{2}(|0101\rangle + |1010\rangle) +    \frac{a-b}{2}(|0110\rangle + |1001\rangle) + \frac{i}{\sqrt{2}}(|0001\rangle + |0010\rangle + |0111\rangle + |1011\rangle).$}
\end{equation*}
\subsection*{\textit{Genuine Multipartite Entanglement Measure}}
 An entanglement measure is called a faithful GME measure if it detects all and only the genuinely entangled multiparty states. Therefore, a GME measure should satisfy some additional criteria over being a faithful entanglement measure \cite{Ma_et_al}. The conditions are the following 
\begin{list}{\textbullet}{}
    \item $E(\rho) = 0$ for all bi-separable states. 
    \item $E(\rho) > 0$ for all states that are genuinely entangled.
\end{list}
The `concurrence fill' is one such GME measure, barring the LOCC monotonicity for three-qubit systems. Next, we present our direct extension of this measure in four-party states.

\section{Genuine multipartite entanglement measure for four qubit systems}
\label{section_3}
In this section, we present a genuine multipartite entanglement measure for four qubit systems. Four qubit systems are the simplest non-trivial case, where we have two different types of bi-partitions, namely \textit{one-to-three} and \textit{two-to-two} bi-partitions. For instance, four-qubit systems with parties represented by $A, B, C$ and $D$, have seven bi-partite concurrences. Out of them the four concurrences $\mathcal{C}_{A|BCD},$ $\mathcal{C}_{B|CDA},$ $\mathcal{C}_{C|DAB},$ $\mathcal{C}_{D|ABC}$ belong to the former type, whereas, the other three concurrences $\mathcal{C}_{AB|CD},$ $\mathcal{C}_{BC|DA},$ and $\mathcal{C}_{AC|BD}$ belong to the latter type. In the following, we present some of the constraints satisfied by these concurrences, which enable us to geometrically characterize the genuine entanglement in four-qubit systems.

\begin{theorem}[Polygon inequality]\label{thm1}
  The \emph{one-to-three} concurrences satisfy the following inequality:
\begin{equation}
    \mathcal{C}_{i|jkl}^{2} \leq \mathcal{C}_{j|kli}^{2} + \mathcal{C}_{k|lij}^{2} + \mathcal{C}_{l|ijk}^{2}.
    \label{eq:polygon_inequality_4qubit}
\end{equation} where $\{i, j, k, l\} = \{A, B, C, D\}\text{ and }i\neq j \neq k \neq l$.  
\end{theorem}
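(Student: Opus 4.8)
The plan is to translate the statement entirely into the language of linear entropy and then deduce it from subadditivity of the linear entropy together with purity of the global state. Writing $S_L(\rho) = 1 - \mathrm{Tr}(\rho^2)$, the defining formula for the I-concurrence gives $\mathcal{C}_{i|jkl}^{2} = 2\bigl(1 - \mathrm{Tr}(\rho_i^2)\bigr) = 2\,S_L(\rho_i)$, where $\rho_i$ is the single-qubit marginal of qubit $i$. Hence, after cancelling the common factor of $2$, the asserted inequality is equivalent to
$$S_L(\rho_i) \le S_L(\rho_j) + S_L(\rho_k) + S_L(\rho_l),$$
a statement purely about the one-qubit marginals of the pure four-qubit state.

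First I would exploit purity of $\rho = |\psi\rangle\langle\psi|$. For any bipartition the two complementary marginals share the same nonzero spectrum (Schmidt decomposition), so $\mathrm{Tr}(\rho_i^2) = \mathrm{Tr}(\rho_{jkl}^2)$ and therefore $S_L(\rho_i) = S_L(\rho_{jkl})$. This converts the left-hand side, an entropy of a single qubit, into the entropy of its three-qubit complement, which is the object on which subadditivity can act. Next I would apply subadditivity of linear entropy twice: viewing $jkl$ as the bipartition $j\,|\,kl$ gives $S_L(\rho_{jkl}) \le S_L(\rho_j) + S_L(\rho_{kl})$, and a second application to $kl$ as $k\,|\,l$ gives $S_L(\rho_{kl}) \le S_L(\rho_k) + S_L(\rho_l)$. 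Chaining these with the purity identity yields $S_L(\rho_i) = S_L(\rho_{jkl}) \le S_L(\rho_j) + S_L(\rho_k) + S_L(\rho_l)$, which is the claim.

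The only real content, and hence the main obstacle, is subadditivity of the linear entropy itself, $S_L(\rho_{XY}) \le S_L(\rho_X) + S_L(\rho_Y)$, equivalently $\mathrm{Tr}(\rho_X^2) + \mathrm{Tr}(\rho_Y^2) \le 1 + \mathrm{Tr}(\rho_{XY}^2)$. If this is invoked as a known property the theorem is immediate; if it must be established, I would use the swap-operator representation of purity on the doubled Hilbert space, writing $\mathrm{Tr}(\rho_{XY}^2) = \mathrm{Tr}\bigl[(\rho_{XY}^{\otimes 2})\,\mathbb{S}_X\mathbb{S}_Y\bigr]$, $\mathrm{Tr}(\rho_X^2) = \mathrm{Tr}\bigl[(\rho_{XY}^{\otimes 2})\,\mathbb{S}_X\bigr]$, and likewise for $Y$, where $\mathbb{S}_X,\mathbb{S}_Y$ are the swaps of the two copies of $X$ and of $Y$. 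The inequality then reduces to $\mathrm{Tr}\bigl[(\rho_{XY}^{\otimes 2})(I-\mathbb{S}_X)(I-\mathbb{S}_Y)\bigr] \ge 0$; since $(I-\mathbb{S}_X)$ and $(I-\mathbb{S}_Y)$ are commuting positive operators (each twice a projector onto an antisymmetric subspace), their product is positive, and its trace against the positive operator $\rho_{XY}^{\otimes 2}$ is nonnegative.

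I would also record that the bound is tight: equality holds, for instance, when qubit $i$ is maximally entangled with a single partner and unentangled with the remaining two, which is exactly the case where the two applications of subadditivity are saturated. This transparently explains why the reverse estimates needed elsewhere in the construction are instead governed by the companion Araki-Lieb inequality rather than by subadditivity.
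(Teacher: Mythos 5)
Your proof is correct, and it takes a genuinely different route from the paper's. The paper works with the normalized Schmidt weight $\mathcal{Y}$: it cites the $N$-qubit polygon inequality $\mathcal{Y}_{i}\le\sum_{j\neq i}\mathcal{Y}_{j}$ of Qian et al.\ \cite{Qian_2018} together with their result that any monotonically increasing concave function of $\mathcal{Y}$ inherits this inequality, and then only has to check that $f(x)=x(2-x)$ (which expresses $\mathcal{C}^{2}$ through $\mathcal{Y}$) is increasing and concave on $[0,1]$. You instead rewrite the claim as $S_{L}(\rho_{i})\le S_{L}(\rho_{j})+S_{L}(\rho_{k})+S_{L}(\rho_{l})$, convert the left-hand side to $S_{L}(\rho_{jkl})$ by purity of the global state, and apply subadditivity of linear entropy twice; your swap-operator verification of subadditivity, reducing it to $\mathrm{Tr}\bigl[(\rho_{XY}^{\otimes 2})(I-\mathbb{S}_{X})(I-\mathbb{S}_{Y})\bigr]\ge 0$ where the two factors are commuting positive operators acting on disjoint tensor slots, is sound. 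Each route buys something. Yours is self-contained (the paper outsources its core content to the Schmidt-weight literature); it runs on exactly the engine --- subadditivity and Araki--Lieb-type estimates for $S_{L}$ --- that the paper already uses for Theorem~\ref{thm2} and Theorem~\ref{thm3}, so under your treatment the entire quadrilateral construction rests on a single mechanism; and it is dimension-independent, since nothing in it requires party $i$ to be a qubit, whereas the parametrization $\mathcal{C}^{2}=\mathcal{Y}(2-\mathcal{Y})$ does. The paper's route is shorter and keeps the Schmidt-weight machinery in play, which it reuses later (the proof of its Theorem~4 manipulates $\mathcal{Y}(\mathcal{C}^{2})=1-\sqrt{1-\mathcal{C}^{2}}$ and the same polygon inequality). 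Your closing remark on saturation --- equality exactly when the two subadditivity steps saturate, e.g.\ qubit $i$ entangled with a single partner and product with the rest --- is a worthwhile observation that the paper does not record.
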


 \begin{proof}
  The normalized Schmidt weight ($\mathcal{Y}$) \cite{Grobe_1994} satisfies the following  inequality for N-qubit systems \cite{Qian_2018}:
\begin{equation}
    \mathcal{Y}_{i} \leq \sum_{j \neq i}{\mathcal{Y}_{j}}
    \label{2}
\end{equation}
where $\mathcal{Y}_{k}$ is the normalized Schmidt weight for the bi-partition between $k$-th qubit and rest of the qubits. It has been proved that any monotonically increasing concave function of $\mathcal{Y}$ also satisfies the above inequality \cite{Qian_2018}. Thus, it is enough to prove that concurrence squared is a monotonically increasing concave function of $\mathcal{Y}$ to show that the aforementioned theorem is true. Concurrence can conveniently be written in terms of normalized Schmidt weight as follows:
\begin{equation}
    \mathcal{C}_{i|jkl}^{2} = \mathcal{Y}_{i}(2-\mathcal{Y}_{i})
\end{equation}
Given a function $f(x) = x(2-x)$. Then $f'(x) = 2(1-x)$ and $f''(x) = -2$. Since $f'(x) \geq 0$ and $f''(x) \leq 0$ in the domain $[0,1]$. Therefore, $f(x)$ is a monotonically increasing concave function for $x \in [0, 1]$. 
  
\end{proof}

\begin{theorem} \label{thm2}
Two one-to-three concurrences and the corresponding two-to-two concurrence form a closed figure, namely, it satisfies the triangle inequalities:
\begin{equation}
    \begin{aligned}
        &C^{2}_{i|jkl} + C^{2}_{j|ikl} \geq C^{2}_{ij|kl}\\
        &C^{2}_{ij|kl} + C^{2}_{i|jkl} \geq C^{2}_{j|ikl}\\
        &C^{2}_{ij|kl} + C^{2}_{j|ikl} \geq C^{2}_{i|jkl},
    \end{aligned}
\end{equation}
where $\{i, j, k, l\} = \{A, B, C, D\}\text{ and }i\neq j \neq k \neq l$.  
\end{theorem}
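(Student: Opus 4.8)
The plan is to recast the three inequalities entirely in terms of the linear entropy $S_L(\sigma)=1-\mathrm{Tr}(\sigma^{2})$ of the reduced states and to recognize them as subadditivity and the two one-sided forms of the Araki--Lieb inequality. Since the complement of $i$ is $jkl$, the defining formula for the I-concurrence gives $\mathcal{C}^{2}_{i|jkl}=2S_L(\rho_i)$, $\mathcal{C}^{2}_{j|ikl}=2S_L(\rho_j)$ and $\mathcal{C}^{2}_{ij|kl}=2S_L(\rho_{ij})$, where $\rho_i,\rho_j$ are exactly the single-qubit marginals of the two-qubit state $\rho_{ij}$. Cancelling the common factor of $2$, the three claims become $S_L(\rho_{ij})\le S_L(\rho_i)+S_L(\rho_j)$ together with $|S_L(\rho_i)-S_L(\rho_j)|\le S_L(\rho_{ij})$; that is, subadditivity and Araki--Lieb for the bipartite state $\rho_{ij}$.

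First I would establish these two entropic inequalities for linear entropy, which (unlike the von Neumann case) are not standard textbook facts. The route I would take is the replica identity $\mathrm{Tr}(\sigma^{2})=\mathrm{Tr}[(\sigma\otimes\sigma)\,\mathbb{S}]$ on two copies of the system $ij$, which lets me write $\mathrm{Tr}(\rho_{ij}^{2})$, $\mathrm{Tr}(\rho_i^{2})$, $\mathrm{Tr}(\rho_j^{2})$ and $1$ as expectation values of $\mathbb{S}_i\otimes\mathbb{S}_j$, $\mathbb{S}_i\otimes I_j$, $I_i\otimes\mathbb{S}_j$ and $I_i\otimes I_j$ in the positive operator $\rho_{ij}\otimes\rho_{ij}$. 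The combination of purities appearing in subadditivity then collects into $-(I_i-\mathbb{S}_i)\otimes(I_j-\mathbb{S}_j)=-4\,P_-^{i}\otimes P_-^{j}$, and the combination appearing in each Araki--Lieb branch into $\mp(I_i-\mathbb{S}_i)\otimes(I_j+\mathbb{S}_j)=\mp 4\,P_-^{i}\otimes P_+^{j}$, where $P_\pm=(I\pm\mathbb{S})/2$ project onto the symmetric and antisymmetric subspaces. Because $P_\pm\ge 0$, the projector products $P_-^{i}\otimes P_-^{j}$ and $P_-^{i}\otimes P_+^{j}$ are positive semidefinite, and the trace of the product of two positive semidefinite operators is nonnegative; reading off the signs yields exactly $\mathrm{Tr}(\rho_i^{2})+\mathrm{Tr}(\rho_j^{2})\le 1+\mathrm{Tr}(\rho_{ij}^{2})$ and $\mathrm{Tr}(\rho_i^{2})+\mathrm{Tr}(\rho_{ij}^{2})\le 1+\mathrm{Tr}(\rho_j^{2})$, with the third following by the $i\leftrightarrow j$ symmetry. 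Reinstating the factor of $2$ and the definition of $S_L$ recovers the three stated triangle inequalities, so that the two \emph{one-to-three} squared concurrences and the \emph{two-to-two} squared concurrence indeed close into a triangle.

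The hard part will be the Araki--Lieb direction: one must be careful that it holds for a \emph{general, possibly mixed} bipartite state $\rho_{ij}$, and not merely by appeal to the purity of the global four-qubit state. The replica/projector argument is attractive precisely because it never uses purity of $\rho_{ij}$ and treats subadditivity and both Araki--Lieb branches on the same footing; if one prefers to cite rather than derive, the same conclusions follow from the known concavity and (sub)additivity properties of the Tsallis-$2$ entropy. Finally, I would note that the degenerate cases in which one concurrence vanishes (the triangle collapsing to a segment) are handled automatically by the same non-strict inequalities, consistent with the later observation that a vanishing bipartite concurrence reduces the quadrilateral to a triangle or to zero area.
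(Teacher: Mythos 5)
Your proof is correct, and its skeleton is the same as the paper's: translate the three squared-concurrence inequalities, via purity of the global four-qubit state and $\mathcal{C}^{2}_{\mathcal{A}|\mathcal{B}}=2S_{L}(\rho_{\mathcal{A}})$, into subadditivity $S_{L}(\rho_{ij})\leq S_{L}(\rho_{i})+S_{L}(\rho_{j})$ and the two Araki--Lieb branches $|S_{L}(\rho_{i})-S_{L}(\rho_{j})|\leq S_{L}(\rho_{ij})$ for the two-qubit marginal. Where you diverge is in how those entropic facts are justified: the paper simply cites the literature for them, while you derive them from scratch with the swap-operator (replica) identity, collecting the subadditivity combination of purities into $-4\,\mathrm{Tr}[(\rho_{ij}\otimes\rho_{ij})\,P_{-}^{i}\otimes P_{-}^{j}]$ and one Araki--Lieb branch into $-4\,\mathrm{Tr}[(\rho_{ij}\otimes\rho_{ij})\,P_{-}^{i}\otimes P_{+}^{j}]$, then using that the trace of a product of two positive semidefinite operators is nonnegative. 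I checked the algebra and it is right. What this buys you is self-containment precisely at the nonstandard step --- subadditivity and Araki--Lieb are textbook facts for von Neumann entropy but require separate proof for linear (Tsallis-2) entropy, which is exactly what the paper outsources to its references --- and it makes explicit that these entropic inequalities hold for an arbitrary mixed $\rho_{ij}$, purity being needed only to identify $\mathcal{C}^{2}_{ij|kl}$ with $2S_{L}(\rho_{ij})$ in the first place. One small notational quibble: writing both Araki--Lieb branches as ``$\mp(I_{i}-\mathbb{S}_{i})\otimes(I_{j}+\mathbb{S}_{j})$'' is loose, since the second branch is not the negative of the first but its $i\leftrightarrow j$ reflection $-(I_{i}+\mathbb{S}_{i})\otimes(I_{j}-\mathbb{S}_{j})=-4\,P_{+}^{i}\otimes P_{-}^{j}$; your explicit appeal to the $i\leftrightarrow j$ symmetry handles this correctly, but the displayed operator identity should be stated for one branch only.
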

\begin{proof}
    Since we are dealing with the pure four-party states, where the concurrence corresponding to a bipartition is related to the linear entropy of subsystems as $C_{A|B} = \sqrt{2S_{L}(\rho_{A})}$. The first inequality follows from the subadditivity of linear entropy, i.e., $S_{L}(\rho_{i})+S_{L}(\rho_{j})\geq S_{L}(\rho_{ij})$ \cite{2008_PRA_linear_entropy,2020_Elsevier_Entropy}. The second and third inequality follows from the Araki-Lieb inequality for linear entropy \cite{2020_Elsevier_Entropy}, namely, $|S_{L}(\rho_{i})-S_{L}(\rho_{j})|\leq S_{L}(\rho_{ij})$, which is equivalent to two inequalities, i.e., $S_{L}(\rho_{i})+S_{L}(\rho_{ij})\geq S_{L}(\rho_{j})$ and $S_{L}(\rho_{j})+S_{L}(\rho_{ij})\geq S_{L}(\rho_{i})$, leading to the second and third inequalities for concurrences in the above theorem. The positivity of concurrences along with the inequalities suggest that the lengths $C^{2}_{i|jkl}$, $C^{2}_{j|ikl}$ and $C^{2}_{ij|kl}$ form a closed figure, triangle in this case. 
\end{proof}

\noindent It is important to note that for pure states, the linear entropy of both the subsystems are equal, i.e., $S_{L}(\rho_{ij}) = S_{L}(\rho_{kl})$, leading to $C^{2}_{ij|kl} = C^{2}_{kl|ij}$. Therefore, the same set of inequalities as in Theorem 2, will be satisfied by the concurrences $C^{2}_{k|ijl}$, $C^{2}_{l|ijk}$ and $C^{2}_{ij|kl}$. Therefore, the two triangles formed using sides $C^{2}_{i|jkl}$, $C^{2}_{j|ikl}$, $C^{2}_{ij|kl}$ and $C^{2}_{k|ijl}$, $C^{2}_{l|ijk}$, $C^{2}_{ij|kl}$ have a side in common. This fact along with the Theorem 1 will lead to a realization that the two-to-two squared concurrences can be interpreted as the diagonals of the quadrilaterals formed by one-to-three squared concurrence.

\subsection*{\textbf{Geometric Interpretation}}

\noindent We emphasize on this point that both Theorem \ref{thm1} and Theorem \ref{thm2} are valid for concurrence as well as squared concurrence. From theorem 1, it is evident that they form a triangle in a plane. Though suggested in \cite{Eberly_2021}, the interpretation of squared concurrences as a higher dimensional figure, such as a face of a tetrahedron, is not straightforward because fixing the face areas of the tetrahedron \textit{does not fix} its volume. A similar kind of drawback is there in the case of a general quadrilateral with the sides as the one-to-three bipartite concurrence squared. It is not possible to fix the area of a quadrilateral by fixing its sides. But if the length of a diagonal is fixed, the area of the quadrilateral is also fixed. From Theorem \ref{thm2}, we showed that one can form two triangles with the concurrence (or squared concurrence) along \textit{two-to-two} bipartition as a common side. This implies that the four \textit{one-to-three} concurrence (or squared concurrence) can be interpreted as four sides of a quadrilateral with that \textit{two-to-two} concurrence (or squared concurrence) as its diagonal. We further use Theorem \ref{thm2} twice with the other two \textit{two-to-two} concurrences (or squared concurrences) to form another two quadrilaterals. These three quadrilaterals include all three different permutations one can use with four sides to make a quadrilateral. The whole construction represents an elegant geometry of the bipartite concurrences (or squared concurrences) for the four qubit case. The following figures depict the construction.

\begin{figure}[h!]
    \centering
    
    \subfloat{\tikzset{every picture/.style={line width=0.75pt}} 

\begin{tikzpicture}[x=0.75pt,y=0.75pt,yscale=-0.82,xscale=0.82]

\draw [fill={rgb, 255:red, 255; green, 255; blue, 170 }  ,fill opacity=0.83 ]   (194.48,67.35) -- (361.55,200.8) -- (359.86,90.81) ;
\draw [color={rgb, 255:red, 74; green, 144; blue, 226 }  ,draw opacity=1 ][fill={rgb, 255:red, 255; green, 255; blue, 170 }  ,fill opacity=0.83 ]   (194.48,67.35) -- (361.55,200.8) -- (141.09,200.5) ;
\draw [color={rgb, 255:red, 0; green, 0; blue, 0 }  ,draw opacity=1 ][fill={rgb, 255:red, 255; green, 255; blue, 170 }  ,fill opacity=0.83 ][line width=1.5]    (194.48,67.35) -- (141.09,200.5) ;
\draw [color={rgb, 255:red, 0; green, 0; blue, 0 }  ,draw opacity=1 ][fill={rgb, 255:red, 255; green, 255; blue, 170 }  ,fill opacity=0.83 ][line width=1.5]    (361.55,200.8) -- (141.09,200.5) ;
\draw [color={rgb, 255:red, 0; green, 0; blue, 0 }  ,draw opacity=1 ][fill={rgb, 255:red, 255; green, 255; blue, 170 }  ,fill opacity=0.83 ][line width=1.5]    (359.86,90.81) -- (194.48,67.35) ;
\draw [color={rgb, 255:red, 0; green, 0; blue, 0 }  ,draw opacity=1 ][fill={rgb, 255:red, 255; green, 255; blue, 170 }  ,fill opacity=0.83 ][line width=1.5]    (359.86,90.81) -- (361.55,200.8) ;

\draw (385.22,118.74) node [anchor=north west][inner sep=0.75pt]  [color={rgb, 255:red, 0; green, 0; blue, 0 }  ,opacity=1 ,rotate=-89.86,xslant=0.02]  {$\mathcal{C}_{A|BCD}^{2}$};
\draw (266.34,50.44) node [anchor=north west][inner sep=0.75pt]  [color={rgb, 255:red, 0; green, 0; blue, 0 }  ,opacity=1 ,rotate=-8.74,xslant=0]  {$\mathcal{C}_{B|ACD}^{2}$};
\draw (226.57,204.67) node [anchor=north west][inner sep=0.75pt]  [color={rgb, 255:red, 0; green, 0; blue, 0 }  ,opacity=1 ,rotate=-0.18,xslant=0.01]  {$\mathcal{C}_{C|ABD}^{2}$};
\draw (137.54,141.88) node [anchor=north west][inner sep=0.75pt]  [color={rgb, 255:red, 0; green, 0; blue, 0 }  ,opacity=1 ,rotate=-293.33,xslant=-0.01]  {$\mathcal{C}_{D|ABC}^{2}$};
\draw (292.98,115.29) node [anchor=north west][inner sep=0.75pt]  [rotate=-40.1,xslant=-0.07]  {$\mathcal{C}_{AB|CD}^{2}$};

\end{tikzpicture}}
    \subfloat{\tikzset{every picture/.style={line width=0.75pt}} 

\begin{tikzpicture}[x=0.75pt,y=0.75pt,yscale=-0.82,xscale=0.82]

\draw [fill={rgb, 255:red, 255; green, 255; blue, 170 }  ,fill opacity=0.83 ]   (468.96,186.26) -- (270.37,82.8) -- (242.13,189.11) ;
\draw [color={rgb, 255:red, 74; green, 144; blue, 226 }  ,draw opacity=1 ][fill={rgb, 255:red, 255; green, 255; blue, 170 }  ,fill opacity=0.83 ]   (468.96,186.26) -- (270.37,82.8) -- (404.57,29.47) ;
\draw [color={rgb, 255:red, 0; green, 0; blue, 0 }  ,draw opacity=1 ][fill={rgb, 255:red, 255; green, 255; blue, 170 }  ,fill opacity=0.83 ][line width=1.5]    (270.37,82.8) -- (404.57,29.47) ;
\draw [color={rgb, 255:red, 0; green, 0; blue, 0 }  ,draw opacity=1 ][fill={rgb, 255:red, 255; green, 255; blue, 170 }  ,fill opacity=0.83 ][line width=1.5]    (468.96,186.26) -- (242.13,189.11) ;
\draw [color={rgb, 255:red, 0; green, 0; blue, 0 }  ,draw opacity=1 ][fill={rgb, 255:red, 255; green, 255; blue, 170 }  ,fill opacity=0.83 ][line width=1.5]    (404.57,29.47) -- (468.96,186.26) ;
\draw [color={rgb, 255:red, 0; green, 0; blue, 0 }  ,draw opacity=1 ][fill={rgb, 255:red, 255; green, 255; blue, 170 }  ,fill opacity=0.83 ][line width=1.5]    (242.13,189.11) -- (270.37,82.8) ;

\draw (224.26,151.99) node [anchor=north west][inner sep=0.75pt]  [color={rgb, 255:red, 0; green, 0; blue, 0 }  ,opacity=1 ,rotate=-285.62,xslant=0.02]  {$\mathcal{C}_{A|BCD}^{2}$};
\draw (451.16,74) node [anchor=north west][inner sep=0.75pt]  [color={rgb, 255:red, 0; green, 0; blue, 0 }  ,opacity=1 ,rotate=-66.42,xslant=0]  {$\mathcal{C}_{B|ACD}^{2}$};
\draw (313.46,193.73) node [anchor=north west][inner sep=0.75pt]  [color={rgb, 255:red, 0; green, 0; blue, 0 }  ,opacity=1 ,rotate=-358.48,xslant=0.01]  {$\mathcal{C}_{C|ABD}^{2}$};
\draw (300.04,43.73) node [anchor=north west][inner sep=0.75pt]  [color={rgb, 255:red, 0; green, 0; blue, 0 }  ,opacity=1 ,rotate=-339.16,xslant=-0.01]  {$\mathcal{C}_{D|ABC}^{2}$};
\draw (335.1,89.05) node [anchor=north west][inner sep=0.75pt]  [rotate=-27.55,xslant=0.04]  {$\mathcal{C}_{AC|BD}^{2}$};

\end{tikzpicture}}
    \subfloat{\tikzset{every picture/.style={line width=0.75pt}} 

\begin{tikzpicture}[x=0.75pt,y=0.75pt,yscale=-0.82,xscale=0.82]

\draw [fill={rgb, 255:red, 255; green, 255; blue, 170 }  ,fill opacity=0.83 ]   (258.66,170.32) -- (386.33,38.86) -- (276,22.5) ;
\draw [color={rgb, 255:red, 74; green, 144; blue, 226 }  ,draw opacity=1 ][fill={rgb, 255:red, 255; green, 255; blue, 170 }  ,fill opacity=0.83 ]   (258.66,170.32) -- (386.33,38.86) -- (486,169.5) ;
\draw [color={rgb, 255:red, 0; green, 0; blue, 0 }  ,draw opacity=1 ][fill={rgb, 255:red, 255; green, 255; blue, 170 }  ,fill opacity=0.83 ][line width=1.5]    (276,22.5) -- (258.66,170.32) ;
\draw [color={rgb, 255:red, 0; green, 0; blue, 0 }  ,draw opacity=1 ][fill={rgb, 255:red, 255; green, 255; blue, 170 }  ,fill opacity=0.83 ][line width=1.5]    (258.66,170.32) -- (486,169.5) ;
\draw [color={rgb, 255:red, 0; green, 0; blue, 0 }  ,draw opacity=1 ][fill={rgb, 255:red, 255; green, 255; blue, 170 }  ,fill opacity=0.83 ][line width=1.5]    (386.33,38.86) -- (486,169.5) ;
\draw [color={rgb, 255:red, 0; green, 0; blue, 0 }  ,draw opacity=1 ][fill={rgb, 255:red, 255; green, 255; blue, 170 }  ,fill opacity=0.83 ][line width=1.5]    (276,22.5) -- (386.33,38.86) ;

\draw (310.54,4.11) node [anchor=north west][inner sep=0.75pt]  [color={rgb, 255:red, 0; green, 0; blue, 0 }  ,opacity=1 ,rotate=-8.25,xslant=0.02]  {$\mathcal{C}_{A|BCD}^{2}$};
\draw (435.34,59.91) node [anchor=north west][inner sep=0.75pt]  [color={rgb, 255:red, 0; green, 0; blue, 0 }  ,opacity=1 ,rotate=-54.52,xslant=0]  {$\mathcal{C}_{B|ACD}^{2}$};
\draw (346.64,171.28) node [anchor=north west][inner sep=0.75pt]  [color={rgb, 255:red, 0; green, 0; blue, 0 }  ,opacity=1 ,rotate=-358.96,xslant=0.01]  {$\mathcal{C}_{C|ABD}^{2}$};
\draw (238.33,114) node [anchor=north west][inner sep=0.75pt]  [color={rgb, 255:red, 0; green, 0; blue, 0 }  ,opacity=1 ,rotate=-277.92,xslant=-0.01]  {$\mathcal{C}_{D|ABC}^{2}$};
\draw (295.9,95.23) node [anchor=north west][inner sep=0.75pt]  [rotate=-314.77,xslant=-0.07]  {$\mathcal{C}_{AD|BC}^{2}$};

\end{tikzpicture}}
    
    \caption{A schematic diagram showing the representation of seven bipartite squared concurrences in the form of three quadrilaterals.}
        
    \label{fig:quads}
\end{figure}
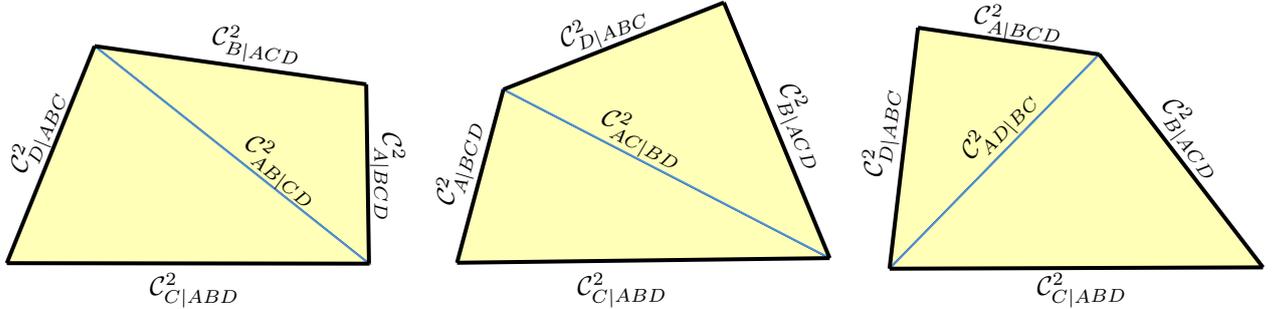

\noindent To form a GME measure out of the construction, we note the following observations. The relations from Theorem \ref{thm2}  
\begin{equation}\label{inequality}
\begin{split}
    &|C^2_{i|jkl}-C^2_{j|ikl}|\leq C^2_{ij|kl}\leq C^2_{i|jkl}+C^2_{j|ikl}\\
    &|C^2_{k|ijk}-C^2_{l|ijk}|\leq C^2_{ij|kl}\leq C^2_{k|ijl}+C^2_{l|ijk}
\end{split}
\end{equation}
 imply that if $C^2_{i|jkl} = 0$, then $C^2_{j|ikl}=C^2_{ij|kl}$. Geometrically this means if any side of the concurrence quadrilateral is zero, then its triangular adjacent side is equal to the diagonal. So, the quadrilaterals become triangles, shown in Fig \ref{Fig2}. And if $C^2_{ij|kl}=0$, then $C^2_{i|jkl}=C^2_{j|ikl}$ and $C^2_{l|ijk}=C^2_{k|ijk}$ along with $C^2_{ik|jl}=C^2_{il|jk}$. This implies that if any of the diagonals is zero, then that quadrilateral becomes collinear with two pairs of equal sides, and the other two diagonals are equal in magnitude (Fig. \ref{Fig3}). This makes the area of the quadrilateral to become zero, with the other two exactly equal. One simple GME construction out of this geometry is the following - the entanglement measure can be interpreted as the multiplication of areas of all the triangles. Thus, if any side of the triangle is zero, the whole thing will be zero. In other words, if concurrence across any of the bipartition is zero, the measure will give zero value. What is left to prove is that when all the sides and diagonal are nonzero, the measure must assign a nonzero value. We prove it in the following section. 
\subsection*{\textbf{Non-existence of certain configurations}}
Here, we show that certain geometric configurations of concurrences are not possible, which rules out the cases when all the bipartite concurrences are nonzero, but the geometric measure gives zero. We demonstrate it to be zero if and only if at least one of the concurrences is zero.
\begin{theorem} \label{thm3}
    If $C^{2}_{i|jkl}, C^{2}_{j|ikl}$ and $C^{2}_{ij|kl}$ are all non-zero, it is not possible to have the sum of one-to-three concurrences equal to the third, i.e, in Theorem \ref{thm2}, we will have strict inequality. 
\end{theorem}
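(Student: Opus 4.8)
\noindent The plan is to translate everything into the linear entropy of the two-qubit reduced state $\rho_{ij}$. Writing $C^{2}_{i|jkl}=2S_{L}(\rho_{i})$, $C^{2}_{j|ikl}=2S_{L}(\rho_{j})$ and $C^{2}_{ij|kl}=2S_{L}(\rho_{ij})$, the three relations of Theorem \ref{thm2} are exactly the subadditivity bound $S_{L}(\rho_{i})+S_{L}(\rho_{j})\geq S_{L}(\rho_{ij})$ and the two Araki--Lieb bounds $S_{L}(\rho_{ij})\geq|S_{L}(\rho_{i})-S_{L}(\rho_{j})|$. Equality in any one of them is precisely the statement that the triangle of Theorem \ref{thm2} degenerates into a segment. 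I would therefore prove the contrapositive: each saturation forces one of the three squared concurrences to vanish, contradicting the assumption that $C^{2}_{i|jkl},C^{2}_{j|ikl},C^{2}_{ij|kl}$ are all nonzero.

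\noindent To locate the saturating states I would use the Bloch representation $\rho_{ij}=\tfrac14\big(\mathbb{I}\otimes\mathbb{I}+\vec a\cdot\vec\sigma\otimes\mathbb{I}+\mathbb{I}\otimes\vec b\cdot\vec\sigma+\sum_{mn}T_{mn}\,\sigma_{m}\otimes\sigma_{n}\big)$, with $\vec a,\vec b$ the Bloch vectors of $\rho_{i},\rho_{j}$ and $T$ the correlation matrix, so that $S_{L}(\rho_{i})=\tfrac12(1-|\vec a|^{2})$, $S_{L}(\rho_{j})=\tfrac12(1-|\vec b|^{2})$ and $S_{L}(\rho_{ij})=\tfrac14\big(3-|\vec a|^{2}-|\vec b|^{2}-\|T\|^{2}\big)$, where $\|T\|^{2}\equiv\sum_{mn}T_{mn}^{2}$. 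The subadditivity gap then equals $\tfrac14\big(1-|\vec a|^{2}-|\vec b|^{2}+\|T\|^{2}\big)$ and the two Araki--Lieb gaps equal $\tfrac14\big(3+|\vec a|^{2}-3|\vec b|^{2}-\|T\|^{2}\big)$ and its image under $\vec a\leftrightarrow\vec b$. The easy direction is immediate: if a marginal is pure, say $|\vec a|=1$, then $\rho_{ij}=\rho_{i}\otimes\rho_{j}$ with $T=\vec a\,\vec b^{\top}$ and the relevant gap collapses to zero, confirming that a pure marginal (equivalently $C^{2}_{i|jkl}=0$) does saturate the bounds.

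\noindent For the converse I would use the local-unitary invariance of all three linear entropies to diagonalise $T=\mathrm{diag}(t_{1},t_{2},t_{3})$ and then impose $\rho_{ij}\geq 0$. In the aligned case, where $\vec a,\vec b$ lie along the third principal axis of $T$, $\rho_{ij}$ is an $X$-state and positivity of its two $2\times2$ blocks gives $(1+t_{3})^{2}\geq(|\vec a|+|\vec b|)^{2}+(t_{1}-t_{2})^{2}$ and $(1-t_{3})^{2}\geq(|\vec a|-|\vec b|)^{2}+(t_{1}+t_{2})^{2}$. Adding these yields $|\vec a|^{2}+|\vec b|^{2}-\|T\|^{2}\leq1-2(t_{1}^{2}+t_{2}^{2})$, so the subadditivity saturation $|\vec a|^{2}+|\vec b|^{2}-\|T\|^{2}=1$ forces $t_{1}=t_{2}=0$; the two block inequalities, now tight, then give $2=2\max(|\vec a|,|\vec b|)$, i.e.\ $|\vec a|=1$ or $|\vec b|=1$, a pure marginal. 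An analogous manipulation handles the Araki--Lieb saturations, where the degeneracy is traced either to a pure marginal or to $\rho_{ij}$ itself being pure (that is, $C^{2}_{ij|kl}=0$); in every case one of the three squared concurrences vanishes.

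\noindent The main obstacle is the fully general configuration: once $T$ is diagonalised the marginal Bloch vectors need not lie along its principal axes, and the spectrum of $\rho_{ij}$ loses the transparent $X$-state form used above. Closing this gap requires showing, in complete generality, that saturation of the linear-entropy inequalities is incompatible with $\rho_{i}$, $\rho_{j}$ and $\rho_{ij}$ all being mixed. I expect to do this either by a direct analysis of the nonnegativity of the principal minors of the $4\times4$ matrix $\rho_{ij}$ (using the residual discrete local-unitary freedom to control the cross terms coupling $\vec a,\vec b$ to the off-axis components of $T$), or by invoking the equality condition for subadditivity of the linear (Tsallis-$2$) entropy, namely that it holds only when $\rho_{ij}$ factorises with a pure tensor factor. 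Either route sends the corresponding concurrence to zero and completes the contradiction.
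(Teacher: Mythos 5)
Your overall strategy (show that saturation of any of the three inequalities of Theorem \ref{thm2} forces one of the three squared concurrences to vanish) is the right one, and your aligned-case algebra is correct, but the argument has a genuine gap that you yourself flag: after local unitaries diagonalise the correlation matrix $T$, the Bloch vectors $\vec a,\vec b$ of $\rho_{i},\rho_{j}$ need not lie along the principal axes of $T$, so $\rho_{ij}$ is not an $X$-state and your block-positivity computation does not apply. This non-aligned configuration is not a technicality to be cleaned up later; it is the generic situation (the local-unitary freedom is exhausted by the singular value decomposition of $T$), and it is exactly where the content of the theorem lies. Moreover, your second proposed escape route --- ``invoke the equality condition for subadditivity of the Tsallis-$2$ entropy'' --- is not an independent method but a restatement of the thing to be proved: establishing that equality condition in full generality is precisely the missing step, so as written the proposal is incomplete, and on this point circular. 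Your treatment of the two Araki--Lieb saturations is likewise only asserted (``an analogous manipulation''), and inside your two-qubit Bloch picture it would run into the same non-aligned obstruction.

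This is also exactly where the paper takes a different, and much shorter, route. It never opens up the Bloch parametrisation; instead it quotes a strengthened subadditivity inequality for linear entropy, $S_{L}(\rho_{12})\leq S_{L}(\rho_{1})+S_{L}(\rho_{2})-2\left(1-\sqrt{1-S_{L}(\rho_{1})}\right)\left(1-\sqrt{1-S_{L}(\rho_{2})}\right)$, whose correction term is strictly positive unless $S_{L}(\rho_{1})=0$ or $S_{L}(\rho_{2})=0$; hence equality in ordinary subadditivity immediately forces a pure marginal, with no case analysis at all. The Araki--Lieb directions are then reduced to subadditivity using purity of the global four-qubit state: $S_{L}(\rho_{kl})=S_{L}(\rho_{ij})$ and $S_{L}(\rho_{jkl})=S_{L}(\rho_{i})$ turn each Araki--Lieb inequality into subadditivity for the split $j|kl$ (respectively $i|kl$), to which the same strengthened inequality applies. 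The cleanest repair of your write-up is to do the same: cite or prove that strengthened inequality, after which your entire Bloch analysis becomes redundant; the alternative --- carrying out the positivity analysis of the full $4\times 4$ matrix $\rho_{ij}$ with off-axis Bloch vectors --- is substantially harder than the $X$-state case you present and is left entirely open in your proposal.
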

\begin{proof}
    In order to show the above, we will use the following sharper inequality than subadditivity for linear entropy \cite{2020_Elsevier_Entropy}
    \begin{equation}\label{subadditivity}
        S_{L}(\rho_{12})\leq  S_{L}(\rho_{1}) + S_{L}(\rho_{2}) - 2(1-\sqrt{1-S_{L}(\rho_{1})})(1-\sqrt{1-S_{L}(\rho_{2})}).
    \end{equation}
    where $\rho_{12}$ is a bipartite state and $\rho_{1}$ and $\rho_{2}$ are reduced states. It is evident that the sum of entropies of the reduced system $S_{L}(\rho_{1}) + S_{L}(\rho_{2})$ upper bounds the right hand side of the above inequality. Therefore, for the inequality $C^{2}_{i|jkl} + C^{2}_{j|jkl} \geq C^{2}_{ij|kl}$ to saturate, we must have either $C^{2}_{i|jkl} = 0$ or $C^{2}_{i|jkl} = 0$. Therefore, in the case, when all three squared concurrences are non-zero, we have a strict inequality. Now we have to show that the LHS of the inequality in \ref{inequality} is also strict.     Using subadditivity of linear entropy,
    \[S_L(\rho_{j}) + S_L(\rho_{kl}) \geq S_L(\rho_{jkl}).\]
    Since we are dealing with pure four-party systems, we have $S_L(\rho_{kl})=S_L(\rho_{ij})$ and $S_L(\rho_{jkl})=S_L(\rho_{i})$. Thus, we can rewrite the above inequality as,
    \[S_L(\rho_{j}) + S_L(\rho_{ij}) \geq S_L(\rho_{i}).\]
   As shown earlier, this inequality is only saturated when either $S_L(\rho_{ij})$ or $S_L(\rho_{j})$ vanishes. Since all three of them are non-zero, therefore, this will be a strict inequality,
    \begin{equation}
        C^{2}_{i|jkl} - C^{2}_{j|ikl} < C^{2}_{ij|kl}
        \label{strict _inequality1}
    \end{equation}
    Following similar steps for the inequality,
    \[S_L(\rho_{i}) + S_L(\rho_{kl}) \geq S_L(\rho_{ikl}),\]
    and using the fact that we are working with a pure state four-party density matrix we get,
    \begin{equation}
        C^{2}_{j|ikl} - C^{2}_{i|jkl} < C^{2}_{ij|kl}
        \label{strict _inequality2}
    \end{equation}
    Combining equations \ref{strict _inequality1} and \ref{strict _inequality2} we get,
    \[|C^{2}_{i|jkl} - C^{2}_{j|ikl}| < C^{2}_{ij|kl}.\]
\end{proof}    
    The non-existence of configurations where the sum of two sides is equal to the third proves that the area of the individual triangles can not be zero when all sides are nonzero. This is a crucial step for our GME measure to be a proper measure.

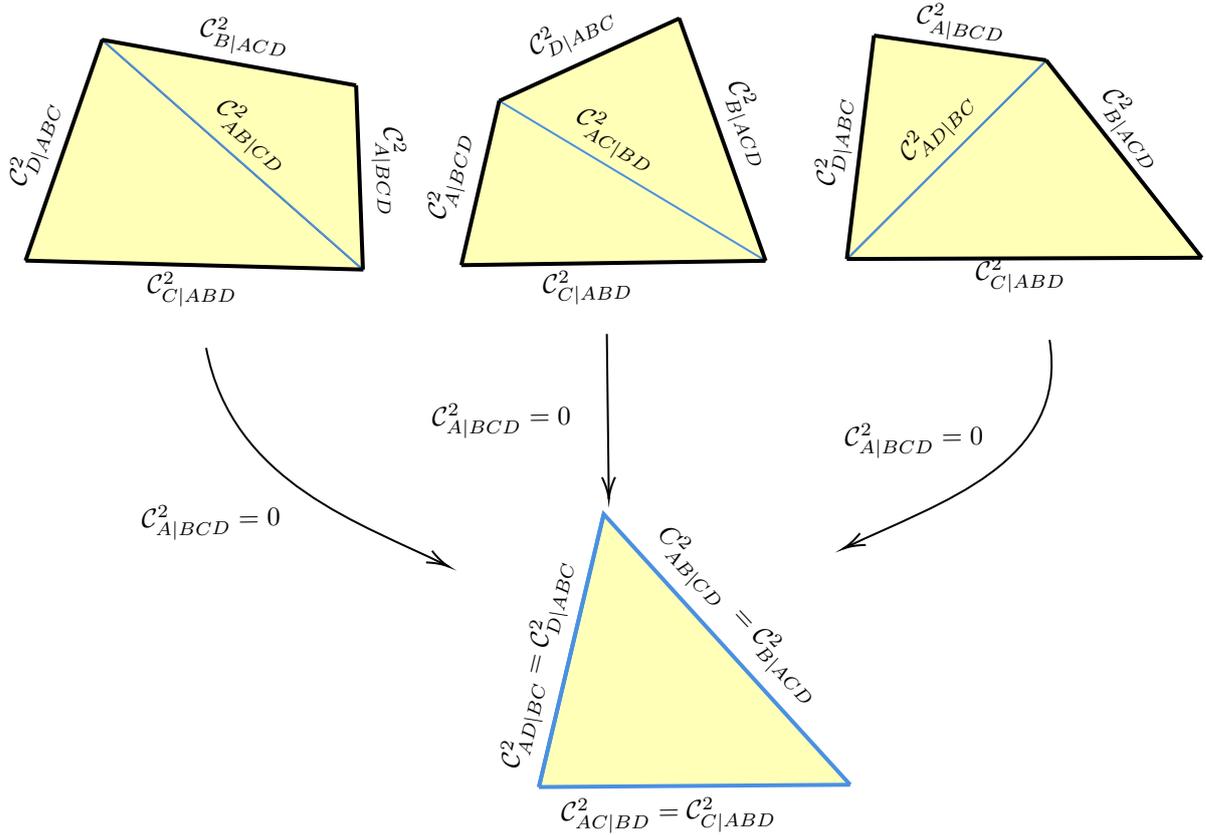
\begin{figure}[h!]
    \centering
    \tikzset{every picture/.style={line width=0.75pt}} 

\begin{tikzpicture}[x=0.75pt,y=0.75pt,yscale=-1,xscale=1]

\draw [fill={rgb, 255:red, 255; green, 255; blue, 170 }  ,fill opacity=0.83 ]   (613,145.67) -- (535.32,46.34) -- (435.81,146.3) ;
\draw [fill={rgb, 255:red, 255; green, 255; blue, 170 }  ,fill opacity=0.83 ]   (395.25,147.19) -- (352.14,25.31) -- (262.29,66.76) ;
\draw [fill={rgb, 255:red, 255; green, 255; blue, 170 }  ,fill opacity=0.83 ]   (25.86,147.11) -- (64.08,36.04) -- (194.25,151.61) ;
\draw [color={rgb, 255:red, 74; green, 144; blue, 226 }  ,draw opacity=1 ][fill={rgb, 255:red, 255; green, 255; blue, 170 }  ,fill opacity=0.83 ][line width=1.5]    (437.47,410.74) -- (314.46,274.63) -- (282.08,411.91) ;
\draw [color={rgb, 255:red, 74; green, 144; blue, 226 }  ,draw opacity=1 ][fill={rgb, 255:red, 239; green, 239; blue, 36 }  ,fill opacity=0.83 ][line width=1.5]    (282.08,411.91) -- (437.47,410.74) -- cycle ;
\draw [color={rgb, 255:red, 74; green, 144; blue, 226 }  ,draw opacity=1 ][fill={rgb, 255:red, 239; green, 239; blue, 36 }  ,fill opacity=0.83 ][line width=1.5]    (314.46,274.63) -- (282.08,411.91) ;
\draw [color={rgb, 255:red, 74; green, 144; blue, 226 }  ,draw opacity=1 ][fill={rgb, 255:red, 255; green, 255; blue, 170 }  ,fill opacity=0.83 ]   (395.25,147.19) -- (262.29,66.76) -- (243.37,149.4) ;
\draw [color={rgb, 255:red, 0; green, 0; blue, 0 }  ,draw opacity=1 ][fill={rgb, 255:red, 255; green, 255; blue, 170 }  ,fill opacity=0.83 ][line width=1.5]    (262.29,66.76) -- (352.14,25.31) ;
\draw [color={rgb, 255:red, 0; green, 0; blue, 0 }  ,draw opacity=1 ][fill={rgb, 255:red, 255; green, 255; blue, 170 }  ,fill opacity=0.83 ][line width=1.5]    (395.25,147.19) -- (243.37,149.4) ;
\draw [color={rgb, 255:red, 0; green, 0; blue, 0 }  ,draw opacity=1 ][fill={rgb, 255:red, 255; green, 255; blue, 170 }  ,fill opacity=0.83 ][line width=1.5]    (352.14,25.31) -- (395.25,147.19) ;
\draw [color={rgb, 255:red, 0; green, 0; blue, 0 }  ,draw opacity=1 ][fill={rgb, 255:red, 255; green, 255; blue, 170 }  ,fill opacity=0.83 ][line width=1.5]    (243.37,149.4) -- (262.29,66.76) ;
\draw [color={rgb, 255:red, 74; green, 144; blue, 226 }  ,draw opacity=1 ][fill={rgb, 255:red, 255; green, 255; blue, 170 }  ,fill opacity=0.83 ]   (64.08,36.04) -- (194.25,151.61) -- (190.84,58.98) ;
\draw [color={rgb, 255:red, 0; green, 0; blue, 0 }  ,draw opacity=1 ][fill={rgb, 255:red, 255; green, 255; blue, 170 }  ,fill opacity=0.83 ][line width=1.5]    (64.08,36.04) -- (25.86,147.11) ;
\draw [color={rgb, 255:red, 0; green, 0; blue, 0 }  ,draw opacity=1 ][fill={rgb, 255:red, 255; green, 255; blue, 170 }  ,fill opacity=0.83 ][line width=1.5]    (194.25,151.61) -- (25.86,147.11) ;
\draw [color={rgb, 255:red, 0; green, 0; blue, 0 }  ,draw opacity=1 ][fill={rgb, 255:red, 255; green, 255; blue, 170 }  ,fill opacity=0.83 ][line width=1.5]    (190.84,58.98) -- (64.08,36.04) ;
\draw [color={rgb, 255:red, 0; green, 0; blue, 0 }  ,draw opacity=1 ][fill={rgb, 255:red, 255; green, 255; blue, 170 }  ,fill opacity=0.83 ][line width=1.5]    (190.84,58.98) -- (194.25,151.61) ;
\draw [color={rgb, 255:red, 74; green, 144; blue, 226 }  ,draw opacity=1 ][fill={rgb, 255:red, 255; green, 255; blue, 170 }  ,fill opacity=0.83 ]   (435.81,146.3) -- (535.32,46.34) -- (449.32,33.89) ;
\draw [color={rgb, 255:red, 0; green, 0; blue, 0 }  ,draw opacity=1 ][fill={rgb, 255:red, 255; green, 255; blue, 170 }  ,fill opacity=0.83 ][line width=1.5]    (449.32,33.89) -- (435.81,146.3) ;
\draw [color={rgb, 255:red, 0; green, 0; blue, 0 }  ,draw opacity=1 ][fill={rgb, 255:red, 255; green, 255; blue, 170 }  ,fill opacity=0.83 ][line width=1.5]    (435.81,146.3) -- (613,145.67) ;
\draw [color={rgb, 255:red, 0; green, 0; blue, 0 }  ,draw opacity=1 ][fill={rgb, 255:red, 255; green, 255; blue, 170 }  ,fill opacity=0.83 ][line width=1.5]    (535.32,46.34) -- (613,145.67) ;
\draw [color={rgb, 255:red, 0; green, 0; blue, 0 }  ,draw opacity=1 ][fill={rgb, 255:red, 255; green, 255; blue, 170 }  ,fill opacity=0.83 ][line width=1.5]    (449.32,33.89) -- (535.32,46.34) ;
\draw    (116,191) .. controls (128.87,259.31) and (194.67,279.6) .. (235.76,300.37) ;
\draw [shift={(237,301)}, rotate = 207.12] [color={rgb, 255:red, 0; green, 0; blue, 0 }  ][line width=0.75]    (10.93,-3.29) .. controls (6.95,-1.4) and (3.31,-0.3) .. (0,0) .. controls (3.31,0.3) and (6.95,1.4) .. (10.93,3.29)   ;
\draw    (537,187) .. controls (546.9,249.37) and (482.31,268.62) .. (435.42,291.31) ;
\draw [shift={(434,292)}, rotate = 333.92] [color={rgb, 255:red, 0; green, 0; blue, 0 }  ][line width=0.75]    (10.93,-3.29) .. controls (6.95,-1.4) and (3.31,-0.3) .. (0,0) .. controls (3.31,0.3) and (6.95,1.4) .. (10.93,3.29)   ;
\draw    (316,184) -- (316.98,265) ;
\draw [shift={(317,267)}, rotate = 269.31] [color={rgb, 255:red, 0; green, 0; blue, 0 }  ][line width=0.75]    (10.93,-3.29) .. controls (6.95,-1.4) and (3.31,-0.3) .. (0,0) .. controls (3.31,0.3) and (6.95,1.4) .. (10.93,3.29)   ;

\draw (216.99,78.4) node [anchor=north west][inner sep=0.75pt]  [color={rgb, 255:red, 0; green, 0; blue, 0 }  ,opacity=1 ,rotate=-88.22,xslant=0.02]  {$\mathcal{C}_{A|BCD}^{2}$};
\draw (113.22,21.38) node [anchor=north west][inner sep=0.75pt]  [color={rgb, 255:red, 0; green, 0; blue, 0 }  ,opacity=1 ,rotate=-8.74,xslant=0]  {$\mathcal{C}_{B|ACD}^{2}$};
\draw (85.32,151.09) node [anchor=north west][inner sep=0.75pt]  [color={rgb, 255:red, 0; green, 0; blue, 0 }  ,opacity=1 ,rotate=-0.18,xslant=0.01]  {$\mathcal{C}_{C|ABD}^{2}$};
\draw (12.91,107.17) node [anchor=north west][inner sep=0.75pt]  [color={rgb, 255:red, 0; green, 0; blue, 0 }  ,opacity=1 ,rotate=-290.48,xslant=-0.01]  {$\mathcal{C}_{D|ABC}^{2}$};
\draw (222.59,125.09) node [anchor=north west][inner sep=0.75pt]  [color={rgb, 255:red, 0; green, 0; blue, 0 }  ,opacity=1 ,rotate=-283.44,xslant=0.02]  {$\mathcal{C}_{A|BCD}^{2}$};
\draw (387.55,52.67) node [anchor=north west][inner sep=0.75pt]  [color={rgb, 255:red, 0; green, 0; blue, 0 }  ,opacity=1 ,rotate=-69.37,xslant=0]  {$\mathcal{C}_{B|ACD}^{2}$};
\draw (282.48,150.79) node [anchor=north west][inner sep=0.75pt]  [color={rgb, 255:red, 0; green, 0; blue, 0 }  ,opacity=1 ,rotate=-358.48,xslant=0.01]  {$\mathcal{C}_{C|ABD}^{2}$};
\draw (272.89,33.5) node [anchor=north west][inner sep=0.75pt]  [color={rgb, 255:red, 0; green, 0; blue, 0 }  ,opacity=1 ,rotate=-334.59,xslant=-0.01]  {$\mathcal{C}_{D|ABC}^{2}$};
\draw (470.74,14.86) node [anchor=north west][inner sep=0.75pt]  [color={rgb, 255:red, 0; green, 0; blue, 0 }  ,opacity=1 ,rotate=-7.43,xslant=0.02]  {$\mathcal{C}_{A|BCD}^{2}$};
\draw (572.09,55.78) node [anchor=north west][inner sep=0.75pt]  [color={rgb, 255:red, 0; green, 0; blue, 0 }  ,opacity=1 ,rotate=-54.52,xslant=0]  {$\mathcal{C}_{B|ACD}^{2}$};
\draw (498.62,144.6) node [anchor=north west][inner sep=0.75pt]  [color={rgb, 255:red, 0; green, 0; blue, 0 }  ,opacity=1 ,rotate=-358.96,xslant=0.01]  {$\mathcal{C}_{C|ABD}^{2}$};
\draw (416.85,109.31) node [anchor=north west][inner sep=0.75pt]  [color={rgb, 255:red, 0; green, 0; blue, 0 }  ,opacity=1 ,rotate=-277.92,xslant=-0.01]  {$\mathcal{C}_{D|ABC}^{2}$};
\draw (348.49,275.15) node [anchor=north west][inner sep=0.75pt]  [rotate=-47.28,xslant=-0.07]  {$C_{AB|CD\ }^{2} =\mathcal{C}_{B|ACD}^{2}$};
\draw (292.09,416.32) node [anchor=north west][inner sep=0.75pt]  [rotate=-359.28,xslant=0.04]  {$\mathcal{C}_{AC|BD}^{2} =\mathcal{C}_{C|ABD}^{2}$};
\draw (259.06,402.12) node [anchor=north west][inner sep=0.75pt]  [rotate=-283.26,xslant=-0.07]  {$\mathcal{C}_{AD|BC}^{2} =\mathcal{C}_{D|ABC}^{2}$};
\draw (457.17,90.01) node [anchor=north west][inner sep=0.75pt]  [rotate=-314.73]  {$\mathcal{C}_{AD|BC}^{2}$};
\draw (305.48,65.3) node [anchor=north west][inner sep=0.75pt]  [rotate=-31.77]  {$\mathcal{C}_{AC|BD}^{2}$};
\draw (127.68,60.77) node [anchor=north west][inner sep=0.75pt]  [rotate=-41.79]  {$\mathcal{C}_{AB|CD}^{2}$};
\draw (82.44,268.72) node [anchor=north west][inner sep=0.75pt]  [color={rgb, 255:red, 0; green, 0; blue, 0 }  ,opacity=1 ,rotate=-359.33,xslant=0.02]  {$\mathcal{C}_{A|BCD}^{2} =0$};
\draw (227.44,217.72) node [anchor=north west][inner sep=0.75pt]  [color={rgb, 255:red, 0; green, 0; blue, 0 }  ,opacity=1 ,rotate=-359.33,xslant=0.02]  {$\mathcal{C}_{A|BCD}^{2} =0$};
\draw (433.44,227.72) node [anchor=north west][inner sep=0.75pt]  [color={rgb, 255:red, 0; green, 0; blue, 0 }  ,opacity=1 ,rotate=-359.33,xslant=0.02]  {$\mathcal{C}_{A|BCD}^{2} =0$};

\end{tikzpicture}
    \caption{If one party is separable from other parties, the four-party measure reduces to the `concurrence fill'}
    \label{Fig2}
\end{figure}
\noindent In addition to the above result, we also have the following theorem eliminating certain geometries of the quadrilateral formed using the one-to-three bipartitions.
\begin{theorem}
    If at least three of the squared concurrences from the \emph{one-to-three} bi-partition is nonzero, there cannot be a scenario where a particular squared concurrence is equal to the sum of the other three \textit{one-to-three} squared concurrence.
\end{theorem}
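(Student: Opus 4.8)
The plan is to recast the forbidden polygon identity in terms of linear entropy and then show it cannot saturate once three of the marginals are mixed. Writing $\mathcal{C}^2_{m|\mathrm{rest}} = 2S_L(\rho_m)$, the configuration to be excluded, namely $\mathcal{C}^2_{i|jkl} = \mathcal{C}^2_{j|ikl} + \mathcal{C}^2_{k|ijl} + \mathcal{C}^2_{l|ijk}$ for some particular index $i$, is exactly the equality $S_L(\rho_i) = S_L(\rho_j) + S_L(\rho_k) + S_L(\rho_l)$. Since the state is pure, the complement identities $S_L(\rho_i) = S_L(\rho_{jkl})$ and $S_L(\rho_{pq}) = S_L(\rho_{\overline{pq}})$ hold, and these let me reproduce the inequality of Theorem \ref{thm1} by two nested applications of subadditivity while tracking precisely when it saturates.

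First I would fix an ordering: isolate one index $r \in \{j,k,l\}$, set $\{p,q\} = \{j,k,l\}\setminus\{r\}$, and write the chain
\[
S_L(\rho_i) = S_L(\rho_{jkl}) \le S_L(\rho_r) + S_L(\rho_{pq}) \le S_L(\rho_r) + S_L(\rho_p) + S_L(\rho_q) = S_L(\rho_j)+S_L(\rho_k)+S_L(\rho_l),
\]
using subadditivity of linear entropy at each step. If the overall equality held, both inequalities in this chain would have to be equalities; in particular the bound $S_L(\rho_{pq}) \le S_L(\rho_p)+S_L(\rho_q)$ would saturate. By the sharper subadditivity relation in \ref{subadditivity} (equivalently, by the saturation analysis already carried out in the proof of Theorem \ref{thm3}), this forces $S_L(\rho_p)=0$ or $S_L(\rho_q)=0$.

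The key step is that this conclusion is available for \emph{every} choice of the isolated index $r$. Running the argument three times, once for each $r \in \{j,k,l\}$, shows that for each of the three pairs $\{j,k\}$, $\{j,l\}$, $\{k,l\}$ at least one member has vanishing linear entropy. Elementary counting then forces at least two of $S_L(\rho_j), S_L(\rho_k), S_L(\rho_l)$ to vanish: if two of them were simultaneously positive, the pair they form would have no vanishing member, contradicting its pairwise condition. Hence the assumed equality forces at least two of the four one-to-three squared concurrences to be zero.

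This directly contradicts the hypothesis that at least three of the one-to-three squared concurrences are nonzero, i.e.\ that at most one of the four marginal linear entropies vanishes. Therefore the equality cannot occur and the polygon inequality of Theorem \ref{thm1} is strict under the stated hypothesis. I expect the main subtlety to be the realization that a single subadditivity decomposition is insufficient, since it pins down only one pairwise vanishing; one must exploit all three orderings to extract two independent vanishing marginals. The purity-induced identities $S_L(\rho_m)=S_L(\rho_{\overline m})$ and the saturation criterion inherited from Theorem \ref{thm3} then complete the argument with no further computation.
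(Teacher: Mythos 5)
Your proof is correct, and it follows a genuinely different route from the one in the paper. The paper argues via the normalized Schmidt weight: writing $\mathcal{Y}(\mathcal{C}^{2}) = 1-\sqrt{1-\mathcal{C}^{2}}$, it uses the monotonicity of $\mathcal{Y}(\mathcal{C}^{2})/\mathcal{C}^{2}$ in $\mathcal{C}^{2}$ to show that the assumed equality would give $\sum_{a}\mathcal{Y}(\mathcal{C}^{2}_{a|\mathrm{rest}}) < \mathcal{Y}(\mathcal{C}^{2}_{i|jkl})$, contradicting the polygon inequality for Schmidt weights of Ref.~\cite{Qian_2018} that also underlies Theorem \ref{thm1}. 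You instead stay entirely within the linear-entropy toolkit of Theorems \ref{thm2} and \ref{thm3}: purity identities, two nested applications of subadditivity in the chain $S_{L}(\rho_{jkl}) \leq S_{L}(\rho_{r}) + S_{L}(\rho_{pq}) \leq S_{L}(\rho_{r}) + S_{L}(\rho_{p}) + S_{L}(\rho_{q})$, and the saturation criterion extracted from the sharpened subadditivity (Eq.~\ref{subadditivity}), namely that equality in $S_{L}(\rho_{pq}) \leq S_{L}(\rho_{p})+S_{L}(\rho_{q})$ forces $S_{L}(\rho_{p})=0$ or $S_{L}(\rho_{q})=0$. Your logic is sound: the assumed equality makes the two ends of the chain coincide, so each intermediate inequality saturates; running over $r \in \{j,k,l\}$ gives a vanishing member in each of the three pairs, and your counting step correctly concludes that at least two of $S_{L}(\rho_{j}), S_{L}(\rho_{k}), S_{L}(\rho_{l})$ vanish, contradicting the hypothesis that at most one of the four one-to-three squared concurrences is zero. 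What your route buys: the theorem becomes independent of the Schmidt-weight polygon inequality (the one external ingredient of the paper's proof), it reuses only machinery the paper has already established, and it yields a sharper structural conclusion, namely that the forbidden equality actually forces two of the three summand concurrences to vanish. What the paper's route buys: brevity and no case analysis, since a single monotonicity property delivers the contradiction at once. One small correction to your closing remark: a single decomposition is not actually insufficient --- if you also exploit saturation of the first link in your chain (the sharpened inequality applied to $\rho_{jkl}$ across the cut $r|pq$ gives $S_{L}(\rho_{r})=0$ or $S_{L}(\rho_{pq})=0$, and the latter combined with the saturated second link forces $S_{L}(\rho_{p})=S_{L}(\rho_{q})=0$), then a single choice of $r$ already produces two vanishing marginals; your three-decomposition version is nevertheless perfectly valid.
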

\begin{figure}
    \centering
    \input{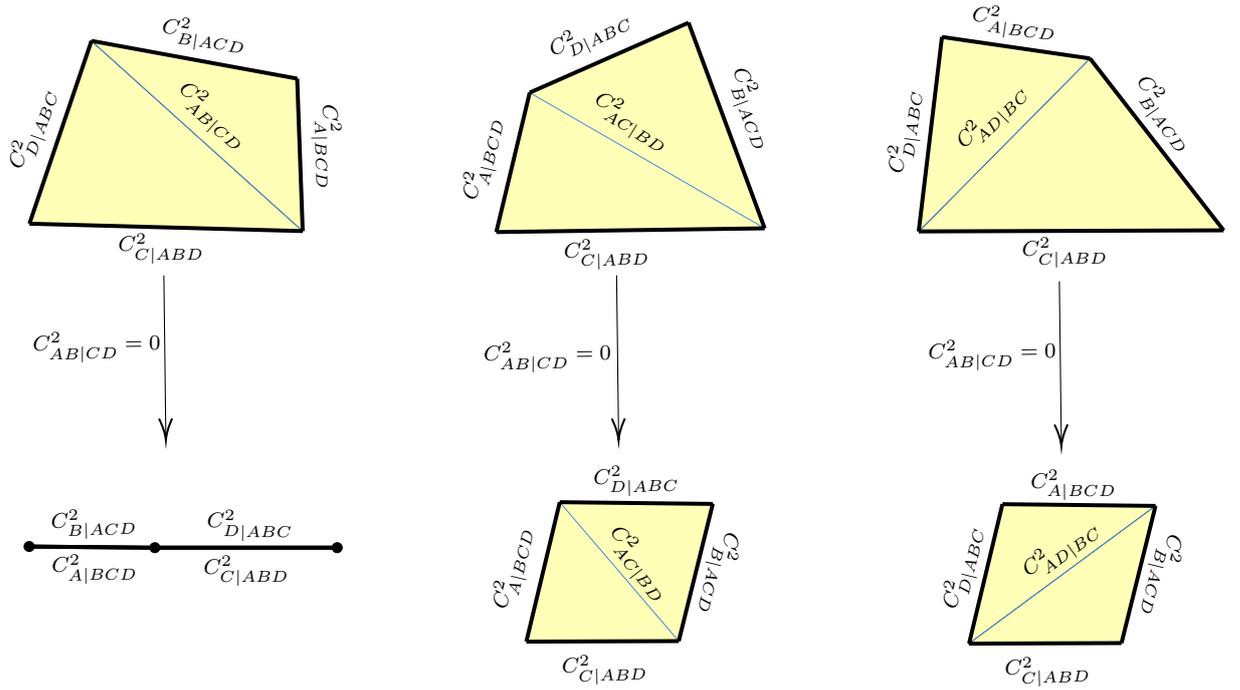}
    \caption{If a four-party state is separable across a \textit{two-to-two} bipartition, the diagonal of a quadrilateral becomes zero. Thus one quadrilateral becomes a line.}
    \label{Fig3}
\end{figure}
\begin{proof}
 We shall do the proof by contradiction. As is well known, concurrence ($\mathcal{C}$) and normalized Schmidt weight ($\mathcal{Y}$) are related as \cite{Qian_2018},
\[\mathcal{C}^{2}(\mathcal{Y}) = \mathcal{Y}(2-\mathcal{Y}).\]
\noindent Now since $\mathcal{Y} \in [0, 1]$ we have,
\[\mathcal{Y}(\mathcal{C}^{2}) = 1 - \sqrt{1 - \mathcal{C}^{2}}\]
Now $\frac{\mathcal{Y}(\mathcal{C}^{2})}{\mathcal{C}^{2}}$ is a monotonically increasing function of $\mathcal{C}^{2}$.
Now for a four-qubit system, let us assume one of the $n$ \emph{one-to-rest} concurrences is equal to the sum of the rest. 
\begin{equation} \label{eq10}
\mathcal{C}_{i|jkl}^{2} = \mathcal{C}_{j|ikl}^{2}+\mathcal{C}_{k|ijl}^{2}+\mathcal{C}_{l|ijk}^{2}    
\end{equation}

\noindent Now if at least two terms in the RHS of Eq. \ref{eq10} are non-zero, we can write
\[\mathcal{C}_{a|rest}^{2}< \mathcal{C}_{i|jkl}^{2}, \text{where } a\in \{j,k,l\}\] 
We emphasize the strictly less than sign. Now, using the monotonicity of $\frac{\mathcal{Y}(\mathcal{C}^{2})}{\mathcal{C}^{2}}$,
\[\implies \frac{\mathcal{Y}(\mathcal{C}_{a|rest}^{2})}{\mathcal{C}_{a|rest}^{2}} < \frac{\mathcal{Y}(\mathcal{C}_{i|jkl}^{2})}{\mathcal{C}_{i|jkl}^{2}},\]
\[\implies \mathcal{Y}(\mathcal{C}_{a|rest}^{2}) < {\mathcal{C}_{a|rest}^{2}}\frac{\mathcal{Y}(\mathcal{C}_{i|jkl}^{2})}{\mathcal{C}_{i|jkl}^{2}}.\]
Summing over all $a$'s we obtain,
\begin{equation}
    \sum_{a=j,k,l}^{}{\mathcal{Y}(\mathcal{C}_{a|rest}^{2})} < \mathcal{Y}(\mathcal{C}_{i|jkl}^{2})
    \label{eqn:no-area-theorem1}
\end{equation}
However, from the polygon inequality \cite{Qian_2018},
\begin{equation}
     \sum_{a=j,k,l}^{}{\mathcal{Y}(\mathcal{C}_{a|rest}^{2})} \geq \mathcal{Y}(\mathcal{C}_{i|jkl}^{2})
    \label{eqn:no-area-thorem2}
\end{equation}
Eq. \ref{eqn:no-area-theorem1} is clearly contradicting Eq. \ref{eqn:no-area-thorem2}. Therefore, we cannot have a scenario where one squared concurrence in the \emph{one-to-three} bi-partition equals the sum of the rest of the squared concurrences if at least three of them are non-zero.   
\end{proof} 

\subsection*{\textit{The geometric measure of genuine four-partite entanglement}}
For a four-party system, we define the Genuine Multipartite Entanglement (GME) measure as the geometric mean of the areas of the six concurrence triangles constructed using both \textit{one-to-three} and \textit{two-to-two} bipartitions. Since the triangle inequality is followed both by the concurrence and squared concurrence, therefore we introduce two GME measures with side lengths as bipartite squared concurrence and concurrence respectively. The motivation follows from the findings in \cite{Ge_PRA_2023} that concurrence triangle is conjectured to be a faithful measure, but 'concurrence fill' is not. The quantification of genuine entanglement within the system is mathematically expressed as follows:

\begin{align}
& \mathcal{F}(|\psi\rangle_{ABCD}) \equiv \sqrt[6]{\prod_{i=1}^{6} \frac{4}{\sqrt{3}} \mathcal{A}_i}\\
& \mathcal{F}_1(|\psi\rangle_{ABCD}) \equiv \sqrt[6]{\prod_{i=1}^{6} \frac{4}{\sqrt{3}} \mathcal{B}_i}, 
\end{align}

\noindent where $\mathcal{A}_i$ denotes the area of the $i^{th}$ triangle formed by \textit{squared concurrences} as sides and $\mathcal{B}_i$ denotes the area of the $i^{th}$ triangle formed by the \textit{concurrences} as sides. Should the state under consideration be classified as either \emph{two-to-two} or \emph{one-to-three} separable, it is observed that a minimum of two out of the six triangle areas vanish. Consequently, this leads to a complete vanishing of the value for $\mathcal{F}(|\psi\rangle_{ABCD})$ and $\mathcal{F}_1(|\psi\rangle_{ABCD})$, rendering them equal to zero. Thus, the proposed measures yield zero values for all types of bi-separable states.

\noindent As widely acknowledged, concurrence remains invariant when subjected to local unitary transformations. Consequently, the geometric characteristics of triangles formed by the concurrence are unaffected by such transformations. As a result, the entanglement quantifiers $\mathcal{F}$ and $\mathcal{F}_1$ exhibit invariance under the influence of local unitary transformations and is symmetric under the permutation of parties. 

\noindent 
By scaling each of the triangle areas by the constant factor $\frac{4}{\sqrt{3}}$, a normalization condition is established for the four-qubit GHZ state, denoted as $|GHZ\rangle_{4}$, ensuring that $\mathcal{F}(|GHZ\rangle_{4}) = 1$ and $\mathcal{F}_1(|GHZ\rangle_{4}) = 1$. Consequently, for any states exhibiting less entanglement than the GHZ state, the resulting values of the entanglement measures $\mathcal{F}$ and $\mathcal{F}_1$ fall within the interval $0 \leq \mathcal{F}, \mathcal{F}_1 < 1$. Conversely, for states with higher degrees of entanglement than the GHZ state, the measures $\mathcal{F}$ and $\mathcal{F}_1$ exceed unity, i.e., $\mathcal{F}, \mathcal{F}_1 > 1$.

\noindent The following table summarizes the values of the entanglement measures $\mathcal{F}$ and $\mathcal{F}_1$ for specific quantum states, including the GHZ-state, W-state, four-qubit cluster state, and Higuchi-Sudbery state:

\begin{table}[h!]
\centering
\begin{tabular}{|l|l|l|}
\hline
                          & $\mathcal{F}$ & $\mathcal{F}_1$ \\ \hline
$|W\rangle_{4}$           & 0.646         & 0.817           \\ \hline
$|GHZ\rangle_{4}$         & 1.000         & 1.000           \\ \hline
$|\mathcal{C}\rangle_{4}$ & 1.095         & 1.077           \\ \hline
$|HS\rangle$              & 1.148         & 1.089           \\ \hline
\end{tabular}
\end{table}

\noindent The above quantum states are represented as follows: 
\[|W\rangle_{4} = \frac{1}{2}(|0001\rangle + |0010\rangle + |0100\rangle + |1000\rangle),\]
\[|GHZ\rangle_{4} = \frac{1}{\sqrt{2}}(|0000\rangle + |1111\rangle),\]
\[|\mathcal{C}\rangle_{4} = \frac{1}{2}(|0000\rangle + |0011\rangle + |1100\rangle - |1111\rangle)\text{, and}\]
\[|HS\rangle = \frac{1}{\sqrt{6}}[|0011\rangle + |1100\rangle + \omega(|0101\rangle + |1010\rangle) + \omega^{2}(|0110\rangle + |1001\rangle)].\]

\noindent The results indicate that the entanglement measures $\mathcal{F}$ and $\mathcal{F}_1$ correctly identify $|GHZ\rangle_{4}$ as more entangled than the $|W\rangle_{4}$ state. It is noteworthy that, in contrast to the three-qubit scenario, the four-qubit GHZ state is not the maximally entangled state. As a result, it is observed from the table that both the states $|HS\rangle$ and $|\mathcal{C}\rangle_{4}$ possess $\mathcal{F}, \mathcal{F}_1 > 1$, signifying that they are more entangled than the $|GHZ\rangle_4$ state, which aligns with the findings reported in reference \cite{HIGUCHI2000213}. For the case of cluster state $|\mathcal{C}\rangle_{4}$, concurrences across all the bipartitions are greater than or equal to the corresponding concurrences for $|GHZ\rangle_{4}$ state. Therefore, it is natural to expect that a genuine entanglement measure should yield a higher genuine entanglement for the cluster state as compared to the $|GHZ\rangle_{4}$ state. As summarized in the table, we observe a higher value for the cluster state than that of $|GHZ\rangle_{4}$ state for both the measures $\mathcal{F}$ and $\mathcal{F}_{1}$. This is in contrast with the measure based on the volume of tetrahedron, the symmetry in the entanglement value across all the bipartitions results in a higher value of GME for $|GHZ\rangle_{4}$ state than the cluster state. However, given that for four-party cases, we have two different kinds of bipartitions, it is not evident why symmetry across all the bipartitions should be a factor for higher entanglement. 

\noindent In the ref. \cite{Eberly_2021}, it was argued that for a three-party case, a GHZ state should be ranked higher in genuine multipartite entanglement as compared to the W state, on the grounds that the GHZ state is capable of certain quantum tasks better than the W state, in particular, deterministic teleportation of a single qubit state. It is worth noting here that a GHZ state in three qubit scenario is also an absolute maximally entangled state, i.e., it has maximal mixedness for the smaller subsystem in any considered bipartition. Therefore, the GHZ state having maximal genuine entanglement for three-party case is in line with it being an absolute maximally entangled state.

\noindent For the four-party case, the GHZ state fails to be an absolute maximally entangled state, (in fact, there is no absolute maximally entangled state in four qubit systems) because it is not maximally mixed in any two-to-two bipartitions. The cluster state, whereas, has the maximally mixed subsystems for all one-to-three bipartitions similar to the GHZ state. However, it is maximally mixed in two of the three two-to-two bipartitions. Therefore, the cluster state is closer to the absolute maximally entangled state as compared to the GHZ state, which is reflected in the higher values of both $\mathcal{F}$ and $\mathcal{F}_{1}$ for the cluster state. The measure proposed in \cite{Eberly_2023} fails to capture this aspect and attributes the highest value of GME to the GHZ state owing to its symmetry under the qubit permutations. Our proposed measure is, therefore, in line with the definition of an absolute maximally entangled state, which suggests that GHZ cannot be the maximally entangled state for four party case.
\section{Conclusion}
\label{section_4}
We proposed a genuine multipartite entanglement measure for four qubit systems, which takes into account entanglement across all possible bi-partitions. We have shown that we need not resort to higher dimensional geometric configurations (more than two) for a genuine multipartite entanglement measure for a four-party scenario. Through various triangle inequalities, we showed that the concurrences across bipartitions can be interpreted as forming closed planar structures, with their areas contributing to the genuine multipartite entanglement. We compared our measure with the other proposed measures and showed that in contrast to the earlier proposal, our measure is in line with the hierarchy suggested by absolute maximal entanglement, i.e., if in every bipartition, the smaller subsystem has more mixedness, the state should be more entangled, leading to the cluster state being more entangled as compared to GHZ in our measure. From the proposed measure, the constraints satisfied by the bi-partitions naturally arise out of the geometric structure. It is remarkable that even for four-party systems, the area of planar geometrical structures can give genuine entanglement in the system, and there is no need for any higher dimensional structure. It will be of significant interest to see whether genuine entanglement measures for arbitrary parties can give such simple geometrical representations. 

\begin{acknowledgements}
We would like to thank Mr. Aditya Raj and Mr. Abhishek Kumar for their earlier contribution to extending the four-party geometric GME. AM would like to thank IISER Kolkata for their hospitality. PKP acknowledges the support from DST, India, through Grant No. DST/ICPS/ QuST/Theme-1/2019/2020- 21/01.
\end{acknowledgements}

%
%

\bibliographystyle{spmpsci_unsort}      
\bibliography{sample}   

%
%

\end{document}